\def\hexdigit#1{\ifnum#1<10 \number#1\else
\ifnum#1=10 A\else\ifnum#1=11 B\else\ifnum#1=12 C\else
\ifnum#1=13 D\else\ifnum#1=14 E\else\ifnum#1=15 F\fi
\fi\fi\fi\fi\fi\fi}
\font\tenmsam=msam10
\font\sevenmsam=msam7
\font\fivemsam=msam5
\font\tenmsbm=msbm10
\font\sevenmsbm=msbm7
\font\fivemsbm=msbm5
\def\mathbb{\fam=\msbfam\tenmsbm}
\mathchardef\N="0\hexdigit\msbfam4E
\mathchardef\R="0\hexdigit\msbfam52
\mathchardef\F="0\hexdigit\msbfam46
\mathchardef\Z="0\hexdigit\msbfam5A
\mathchardef\le="3\hexdigit\msafam36
\mathchardef\ge="3\hexdigit\msafam3E
\def   \Mod   #1{\;{\rm mod}\;#1}
\begin{document}

\title{Certificate Transparency with Enhancements \\ and Short Proofs
\thanks{This is the author's (full) version of the paper with the same title published in ACISP 2017 (available at \href{https://link.springer.com/chapter/10.1007/978-3-319-59870-3_22}{link.springer.com}).}}

\author{Abhishek Singh\inst{1}\thanks{Work done while at Indian Statistical Institute, Kolkata, India.}
\and
Binanda Sengupta\inst{2}
\and
Sushmita Ruj\inst{2}}

\index{Singh, Abhishek}
\index{Sengupta, Binanda}
\index{Ruj, Sushmita}
\institute{
IBM Research Laboratory, New Delhi, India\\
\email{absingh0@in.ibm.com}
\and
Indian Statistical Institute, Kolkata, India\\
\email{binanda\_r@isical.ac.in, sush@isical.ac.in}
}

\maketitle

\pagestyle{plain}

\begin{abstract}
Browsers can detect malicious websites that are provisioned with
forged or fake TLS/SSL certificates. However, they are not so good at
detecting malicious websites if they are provisioned with mistakenly
issued certificates or certificates that have been issued by a
compromised certificate authority.
Google proposed certificate transparency which is an open framework
to monitor and audit certificates in real time.
Thereafter, a few other certificate transparency schemes have been proposed which can even handle revocation.
All currently known constructions use Merkle hash trees and have proof size logarithmic in the number of certificates/domain owners.

We present a new certificate transparency scheme with short (constant size) proofs.
Our construction makes use of dynamic bilinear-map accumulators.
The scheme has many desirable properties
like efficient revocation, low verification cost and
update costs comparable to the existing schemes.
We provide proofs of security and evaluate the performance of our scheme.

\keywords{Certificate transparency, Revocation, Bilinear-map accumulator.}
\end{abstract}

\section{Introduction}
\label{sec:intro}
In public key encryption schemes, the sender encrypts a message using the receiver's public key
to produce a ciphertext, and the receiver decrypts the ciphertext to obtain the message using her private key.
On the other hand, in digital signature schemes, a verifier uses the public key of the signer
in order to check whether a signature on a given message is valid. Thus, in public key cryptography, a web user should be able to 
verify the authenticity of the public keys of different domains. Suppose, for example, a web user logs in
into her bank account through a web browser, and this web session is made secure by using the public key
of the concerned bank. If the web browser uses the public key of some attacker instead of
the bank's public key, then all the (possibly sensitive) information along with the login 
credentials may be known to the attacker who can misuse them later.

One solution to prevent such attacks is to rely on a third-party entity called {\em certificate authority} (CA)
that issues digital certificates showing the association of public keys with the domain owners.
The certificate authority signs each of these certificates using its private key.
When a web user wants to communicate with a server, she receives the server's certificate (signed by an appropriate CA). 
The process of verifying the ``signed certificate" is done by the web user's software 
(typically a web browser) that maintains an internal list of popular CAs and 
their public keys. It uses the appropriate public key to verify the signature on the server's certificate.
However, this CA model suffers from the following two major problems~\cite{MDRyan}. 
Firstly, if the CA is untrusted, then it may issue certificates which certify
the  ownership of fake public keys that could be created by an attacker or by the CA itself~\cite{DigiNotar}.
So, the CA must be trustworthy. Secondly, if the private key of a certificate owner
is compromised, then the CA must revoke the corresponding certificate before its expiry date.

In order to overcome these issues, researchers have come up with various solutions. 
Techniques like certificate pinning~\cite{langley2011public,marlinspike2013trust} and
crowd-sourcing~\cite{AlicherryK09,amann2012revisiting,eckersley2010ssliverse,WendlandtAP08}
have been proposed to restrict the browser to obtain certificates from the verified CAs only.
All the approaches discussed above are centralized where a certificate authority acts
as a trusted third-party responsible for managing digital certificates for
domain owners. An alternative approach to the problem of public authentication
of public key information is the ``Web of Trust''~\cite{WoTGermano}
(described by Phil Zimmerman in the context of PGP~\cite{WoTWiki}) that uses self-signed certificates
and third-party attestations of those certificates. It is entirely decentralized in that
a domain owner signs the public keys of other domain owners (whom it trusts) and designates them
as trustworthy.
However, it is difficult for a new domain owner joining the network
to get its public key signed by others.
Moreover, it does not deal with key revocation.

Certificate transparency (CT)~\cite{laurie2013rfc,CTWeb}, a technique
proposed by Google, aims to make certificate issuance transparent by
efficiently detecting fake certificates issued by malicious certificate authorities.
To achieve transparency, public append-only log structures are maintained containing all the
certificates. Domain owners can obtain proofs that their certificates are recorded 
in a log structure appropriately. Then, they provide the certificate along with a proof
to their clients so that the clients can be convinced about the authenticity of the received
certificate. Google's CT scheme provides two basic proofs:
\textit{proof of presence} (that is, the issued certificate is present in the log structure)
and \textit{proof of extension} (that is, the log structure is maintained in an append-only mode).
However, certificate transparency by
Google does not handle revocation of a certificate. Ryan~\cite{MDRyan}
extended Google's scheme to support certificate revocation efficiently and
provided two more proofs: \textit{proof of currency} (that is, the issued certificate
is current or active) and \textit{proof of absence of a domain owner} (that is, no certificates
have been issued for a particular domain owner). In both of the CT schemes described above,
the size of a proof, the computation time (and verification time) of the proof
are logarithmic in the number of certificates present in the log structure.
All these schemes are based on Merkle hash trees which have proofs in $O(\log n)$, where $n$ is the number of certificates. 
We take a different approach to have shorter proofs.
We achieve this efficiency by using bilinear-map accumulators~\cite{Damgard}.

On the other hand, certificate revocation lists (CRL) are required  in many applications.
Certificates can be revoked by CAs even before the expiry of the certificates (e.g., when the corresponding
private key is known to be compromised). In this work, we introduce a proof of absence of a suspected certificate
so that an auditor can ask for such proofs from the log maintainer for the certificates which belong to the CRL.
To the best of our knowledge, the proof of absence of a certificate has not been considered in any existing literature.

\begin{table}[t]
\footnotesize
\centering
\begin{tabular}{|c|c|c|c|c|c|c|}
\hline
		& \multirow{4}{*}{{\bf Parameters}} & {\bf Proof of} & {\bf Proof of} & {\bf Proof of} & {\bf Proof} & {\bf Proof}\\
{\bf Schemes} 	& & {\bf Presence of a} & {\bf Absence of a} & {\bf Absence of a} & {\bf of} & {\bf of}\\
{\bf for CT}	& & {\bf Certificate} & {\bf Certificate} & {\bf Domain Owner} & {\bf Extension} & {\bf Currency}\\
		& & {\bf (Type 1)} & {\bf (Type 2)} & {\bf (Type 3)} & {\bf (Type 4)} & {\bf (Type 5)}\\
\hline
\multirow{5}{*}{Google~\cite{laurie2013rfc}} & Proof Size & O(log $n$)& - & - & O(log $n$) & - \\ \cline{2-7}
&Cost of Proof & \multirow{2}{*}{O(log $n$)}& \multirow{2}{*}{-} & \multirow{2}{*}{-} & \multirow{2}{*}{O(log $n$)} & \multirow{2}{*}{-} \\ 
&Computation & &  &  &  &  \\ \cline{2-7}
&Cost of Proof & \multirow{2}{*}{O(log $n$)}& \multirow{2}{*}{-} & \multirow{2}{*}{-} & \multirow{2}{*}{O(log $n$)} & \multirow{2}{*}{-} \\ 
&Verification & &  &  &  &  \\ \cline{2-7}
\hline

\multirow{5}{*}{Ryan~\cite{MDRyan}} & Proof Size & - & - & O(log $t$)& O(log $n$) & O(log $t$) \\\cline{2-7}
&Cost of Proof & \multirow{2}{*}{-}& \multirow{2}{*}{-} & \multirow{2}{*}{O(log $t$)} & \multirow{2}{*}{O(log $n$)} & \multirow{2}{*}{O(log $t$)} \\ 
&Computation & &  &  &  &  \\ \cline{2-7}
&Cost of Proof & \multirow{2}{*}{-}& \multirow{2}{*}{-} & \multirow{2}{*}{O(log $t$)} & \multirow{2}{*}{O(log $n$)} & \multirow{2}{*}{O(log $t$)} \\ 
&Verification & &  &  &  &  \\ \cline{2-7}

\hline

& Proof Size & O(log $t)^\dag$ & O(1) & O(log $t$) & O(log $n$) & O(1) \\\cline{2-7}
Our &Cost of Proof & \multirow{2}{*}{O(log $t)^\dag$}& \multirow{2}{*}{O($m$)} & \multirow{2}{*}{O(log $t$)} & \multirow{2}{*}{O(log $n$)} & \multirow{2}{*}{O(log $m$)} \\ 
Scheme &Computation & &  &  &  &  \\ \cline{2-7}
&Cost of Proof & \multirow{2}{*}{O(log $t)^\dag$}& \multirow{2}{*}{O(1)} & \multirow{2}{*}{O(log $t$)} & \multirow{2}{*}{O(log $n$)} & \multirow{2}{*}{O(1)} \\ 
&Verification & &  &  &  &  \\ \cline{2-7}
\hline
\end{tabular}
\vspace{.05in}
\caption{Comparison among the existing certificate transparency schemes.
	 Here, $n$ is the total number of certificates present in the log structure, $t$ is the total number of domain owners
	 and $m$ is the total number of active certificates present in the log structure
	 ($n\ge t\ge m$).\smallskip\newline
	 $\dag$ For a proof of presence, a certificate is searched in the list of the most recent $N$ (constant) public keys
	 of the corresponding domain owner $u$. This list is maintained in a node (associated with $u$) of the searchTree.
	 Instead, if we want to search the certificate in the chronTree (which includes
	 all the historical certificates), then the value of this parameter
	 is exactly the same as that for Google~\cite{laurie2013rfc}.
	 }\label{tab:compar}
\end{table}

\bigskip
\noindent{\bf Our Contribution}\quad Our contributions are summarized as follows.

\begin{itemize}
  \item We have developed and extended the idea of enhanced certificate transparency proposed by Ryan~\cite{MDRyan}.
  We have designed a certificate transparency scheme (using bilinear-map accumulators and binary trees)
  that supports all the proofs found in the previous works.
  For the existing proofs, the parameters in our scheme are comparable to those
  proposed in the earlier schemes.\smallskip
  
  \item Some of our proofs are shorter (of constant size) than those proposed in previous works.
  
  \item In addition to the proof of currency, we have introduced another proof
  (\textit{proof of absence of a certificate}) related to certificate revocation.
  Both of these proofs are of constant size, and verification cost is also constant
  for them.\smallskip
  
  \item We have provided the security analysis and the performance evaluation of our scheme.
  We defer the detailed performance analysis of our construction in
  Section~\ref{sec:perf_ana}. For a quick review, we summarize
  the comparison of our construction with the existing schemes for certificate
  transparency in Table~\ref{tab:compar}.
\end{itemize}

The rest of the paper is organized as follows. In Section~\ref{sec:background},
we briefly discuss about the preliminaries and background related to our work.
Section~\ref{sec:our_scheme} describes the detailed construction of our scheme
along with the data structures used in the construction. 
In Section~\ref{sec:security_ana}, we provide the security analysis of our scheme.
Section~\ref{sec:perf_ana} provides the performance analysis of our scheme. 
In the concluding Section~\ref{sec:conclusion}, we summarize the work done
in this paper.

\section{Preliminaries and Background}
\label{sec:background}

\subsection{Notation}
\label{notation}
We take $\lambda$ to be the security parameter.
An algorithm $\mathcal{A}(1^\lambda)$ is a probabilistic polynomial-time
algorithm when its running time is polynomial in $\lambda$ and its output $y$
is a random variable which depends on the internal coin tosses of $\mathcal{A}$.
An element $a$ chosen uniformly at random from a set $S$ is denoted as $a\xleftarrow{R}S$.
A function $f:\N\rightarrow\R$ is called negligible in $\lambda$ if for
all positive integers $c$ and for all sufficiently large $\lambda$,
we have $f(\lambda)<\frac{1}{\lambda^c}$.

\subsection{Merkle Hash Tree}
\label{MHT}
A Merkle hash tree~\cite{Merkle_CR} is a binary tree where each leaf-node stores a data item.
The label of each leaf-node is the data item stored in the node itself.
A collision-resistant hash function $h$ is used to label the intermediate nodes of the tree.
The label of a intermediate node $v$ is the output of $h$ computed on the labels of the children
nodes of $v$.
A Merkle hash tree is used as a standard tool for efficient memory-checking.
Fig.~\ref{fig:merkletree} shows a Merkle hash tree containing the data items $\{d_1,d_2,\ldots,d_8\}$ 
stored at the leaf-nodes. Consequently, the labels of the intermediate nodes are computed
using the hash function $h$. The hash value of the root node $A$ (the root digest) is made public.
The proof showing that a data item $d$ is present in the tree consists of the data item $d$ and
the labels of the nodes along the \textit{associated path} (the sequence of siblings of the node
containing the data item $d$). For example, a proof showing that $d_3$ is present in the tree
consists of $\{d_3,(d_4,l_D,l_C)\}$, where $d_4,l_D$ and $l_C$ are the labels of the nodes
$K,D$ and $C$, respectively. Given such a proof, a verifier computes the hash value of the root.
The verifier outputs \texttt{accept} if the computed hash value matches with the public root digest;
it outputs \texttt{reject}, otherwise. The size of a proof is logarithmic in the number of data items
stored in the leaf-nodes of the tree.

\begin{figure}[htbp]
\centering
\begin{tikzpicture}[level distance=1.2cm,
                      level 1/.style={sibling distance=3.3cm}, 
                      level 2/.style={sibling distance=1.5cm}, 
                      level 3/.style={sibling distance=1cm}, 
                      every text node part/.style={align=center}]
                    \node {A \\ $h(h(h(d_1, d_2), h(d_3, d_4)), h(h(d_5, d_6), h(d_7, d_8)))$}
                        child {node {B \\ $h(h(d_1, d_2), h(d_3, d_4))$}
                                child {node {D \\ $h(d_1, d_2)$}
                                    child {node {H \\ $d_1$}}
                                    child {node {I \\ $d_2$ }}
                                }
                                child {node {E \\ $h(d_3, d_4)$}
                                    child {node {J \\ $d_3$}}
                                    child {node {K \\ $d_4$}}
                                }
                        }
                        child {node {C \\ $h(h(d_5, d_6), h(d_7, d_8))$}
                                child {node {F \\ $h(d_5, d_6)$}
                                    child {node {L \\ $d_5$}}
                                    child {node {M \\ $d_6$}}
                                }
                                child {node {G \\ $h(d_7, d_8)$}
                                    child {node {N \\ $d_7$}}
                                    child {node {O \\ $d_8$}}
                                }
                        };
                    \end{tikzpicture}

\caption{A Merkle hash tree containing data items $\{d_1,d_2,\ldots ,d_8\}$.}
\label{fig:merkletree}
\end{figure}

Due to the collision-resistance property of $h$, it is impossible (except with some
probability negligible in the security parameter $\lambda$) to add a new data item
in the Merkle hash tree without changing the root digest of the tree.

\subsection{Bilinear Map}
\label{bmap}
Let $G_1,G_2$ and $G_T$ be multiplicative cyclic groups of prime order $p$.
Let $g_1$ and $g_2$ be generators of the groups $G_1$ and $G_2$, respectively.
A bilinear map is a function $e: G_1\times G_2\rightarrow G_T$ such that:
(1) for all $u\in G_1, v\in G_2, a\in\Z_p, b\in\Z_p$, we have $e(u^a,v^b)=e(u,v)^{ab}$
(bilinear property), and
(2) $e$ is non-degenerate, that is, $e(g_1,g_2)\not = 1$.
Furthermore, properties (1) and (2) imply that
$e(u_1\cdot u_2,v)=e(u_1,v)\cdot e(u_2,v)$
for all $u_1, u_2\in G_1, v\in G_2$.

If $G_1=G_2=G$, the bilinear map is symmetric; otherwise, asymmetric.
Unless otherwise mentioned, we consider only the bilinear maps which
are efficiently computable and symmetric.

\subsection{Bilinear-Map Accumulator}
\label{bmapacc}

A cryptographic accumulator is a one-way membership function. 
It answers a query to check whether an element is a member
of a set $X$ without revealing the individual members of $X$.
An accumulator scheme was introduced by Benaloh and de Mare~\cite{BenalohM} 
and further developed by Baric and Pfitzmann~\cite{Baric}.
Both of these constructions are based on RSA exponentiation
functions (secure under the \textit{strong RSA}
assumption) and provides a constant size membership witness
for any element in $X$ with respect to the accumulation value
denoted by $A(X)$~\cite{DerlerHS15}.
Universal accumulators~\cite{universalAccumulator} 
are designed to provide both membership and non-membership
witnesses of constant size. Camenisch and Lysyanskaya~\cite{dynamicAccumulator}
proposed a dynamic accumulator in which elements can
be efficiently added into or removed from the accumulator.
Whenever an element is inserted or deleted from $X$,
then this results in an update on $A(X)$ and the membership 
(and non-membership) witnesses.

Nguyen~\cite{Nguyen} constructed the first dynamic (but not universal) accumulator
based on bilinear maps. Later, Damg{\aa}rd and Triandopoulos~\cite{Damgard}
extended the work of Nguyen in order to provide both membership and non-membership
witnesses. This scheme is proved secure under the {\em q-strong Diffie-Hellman} assumption.
We use this accumulator in our construction.
We briefly describe this scheme as follows.

Let an algorithm BLSetup$(1^\lambda)$ output $(p,g,G,G_T,e)$ as the parameters of a
bilinear map, where $g$ is a generator of $G$.
Given a set $X = \{x_1,x_2,\ldots,x_n\}$, an accumulation function
$f_s(X) : 2^{\Z_p^*} \rightarrow G$ gives the accumulation value $A(X)$
defined as $f_s(X)=A(X)=g^{(x_1+s)(x_2+s)\ldots(x_n+s)}$,
where $s \xleftarrow{R} \Z_p^*$ is the secret trapdoor information.
The set $\{g^{s^i}|0 \le i \le q\}$ is public, where $q$ is an upper bound on $|X|$.

For any $x \in X$, the membership witness is defined as
$w_x = g^{\prod_{x_j \in X: x_j \neq x}(x_j+s)}$. The verifier can
verify the membership witness by checking the equation
\begin{equation*}
	e(w_x,g^x \cdot g^s) \stackrel{?}{=} e(A(X),g).
\end{equation*}

For any $y \notin X$, the non-membership witness is defined
as $\hat{w}_y=(w_y,v_y)$, where $v_y = - \prod_{x \in X} (x-y) \text{ mod }p$
and $w_y = g^{\frac{(\prod_{x \in X}(x+s))+v_y}{y+s}}$.
Now, the verifier can verify the non-membership witness
by checking the equation
\begin{equation*}
	e(w_y,g^y \cdot g^s) \stackrel{?}{=} e(A(X) \cdot g^{v_y},g).
\end{equation*}

Under the $q$-Strong Diffie-Hellman assumption, any probabilistic polynomial-time algorithm
$\mathcal{B}(1^\lambda)$, given any set $X$ ($|X| \le q$) and set $\{g^{s^i}|0 \le i \le q\}$,
finds a fake non-membership witness of a member of $X$ or a fake membership witness of a
non-member of $X$ with respect to $A(X)$ only with a probability negligible in $\lambda$
(measured over the random choice of $s \in \Z_p^*$ and the internal coin tosses of
$\mathcal{B}$) \cite{Damgard,Nguyen}.

\subsection{Certificate Transparency}
\label{CT}
Certificate transparency (CT)~\cite{laurie2013rfc,CTWeb} is a technique
proposed by Google in order to efficiently detect certificates
maliciously issued by certificate  authorities. A detailed description
of this open framework is given in~\cite{CTWeb}.
The framework consists of the following main components.

\begin{itemize}
    \item {\bf Certificate Log}: All the certificates that have been issued
    by certificate authorities are stored in append-only log structures.
    These log structures are maintained by log maintainers in an authenticated fashion using Merkle hash trees (see Section~\ref{MHT}).
    This enables a log maintainer to provide two types of verifiable cryptographic proofs: 
    (a) proof of presence (that is, the issued certificate is present in the log structure) and 
    (b) proof of extension (that is, the log structure is maintained in an append-only mode).

    \item {\bf Monitors}: Monitors are publicly run servers that look for suspicious certificates
    (illegitimate or unauthorized certificates, certificates with strange permissions or unusual certificate extensions)
    by contacting the log maintainers periodically. 
    Monitors also verify that all logged certificates are visible in the log structure.
    Certificate authorities or domain owners can check the validity of a certificate with the help of monitors.

    \item {\bf Auditors}: Auditors are lightweight software components that
    can verify that logs are behaving correctly and are cryptographically consistent. 
    They can also check whether a particular certificate is recorded in a log appropriately.
    An auditor may be a part of a browser’s TLS client or a standalone service or a secondary function of a monitor.
\end{itemize}

We consider a certificate $c=cert(u, pk_u)$ as a signed pair $(u, pk_u)$, where $u$ is a domain owner and $pk_u$ is a 
public key of $u$.
A log maintainer maintains certificates issued by CAs in a Merkle hash tree that
stores the certificates as the leaf-nodes in chronological order (left-to-right).
A certificate authority or a domain owner can request the log maintainer to include a particular certificate.
Once the certificate maintainer logs in the certificate, it sends a ``Signed Certificate Timestamp'' (SCT)
to the CA or the domain owner via one of the following methods: X.509v3 extension,
TLS extension or OCSP (Online Certificate Status Protocol) stapling~\cite{CTWebMethod}.
On the other hand, gossip protocols \cite{NGR15} allow web users (browsers)
and domain owners to share information that they receive from log maintainers. 
Chuat \emph{et al.}~\cite{ChuatSPLM15} presented an efficient gossip protocol to verify consistency of log maintainers.
Some other works related to certificate transparency are discussed in Section~\ref{sec:intro} and Section~\ref{ECT}.

Whenever a new certificate is added, it is appended to the right of the tree.
The browser of a web user accepts the certificate only if it is accompanied by a proof of
the presence of the in the log. For proof of extension, the monitor
submits two hash values (computed at different time) of the log to the CA.
The CA returns a proof that one of them is an extension of the other.
Both the proof of presence and the proof of extension can be done in time/space
complexity of $O(\log n)$, where $n$ is the total number of certificates.

\section{Related work}
\label{ECT}
Enhanced certificate transparency (ECT) by Ryan~\cite{MDRyan} proposes an idea to address
the revocation problem that was left open by Google. It provides transparent key revocation.
It also reduces reliance on trusted parties by designing the monitoring role so that it 
can be distributed among the browsers.
In this extension, Ryan introduced two proofs: (a) proof of currency (that is, the certificate
is issued and not revoked) and (b) proof of absence of a domain owner (that is, the CA has not issued
any certificate for a particular domain owner). 
Both of these proofs are logarithmic in the number of certificates.
The proof of extension remains the same as that of Google's certificate transparency.

Recently, Dowling \emph{et al.}~\cite{DowlingGHS16} proposed a formal framework of secure logging in general.
Formal security proofs for secure logging have been presented. The underlying framework consider Google's framework
and does not support revocation of domain owners. 
However, only the proof of presence has been considered in this work. 
A different line of work is accountable PKI (APKI)~\cite{KimHPJG13,BasinCKPSS14}, which has been used
to address the decrease in trust on CAs. APKI efficiently handles revocation and is still a tree-based protocol.
Deployment challenges for log based PKI enhancements have been presented in \cite{MatsumotoSP15}.
Accountability in future internet has been discussed in \cite{BechtoldP14}. 
Key transparency is another closely related line of work.
It differs from the certificate transparency in that, in key transparency, there is a requirement
to protect the secrecy of the public keys of the domain owners.
Coniks~\cite{MelaraBBFF15} is an example of a key transparency scheme.

All these schemes maintain central log maintainers. Some Blockchain based solutions are also available.
A decentralized PKI setting has been proposed in~\cite{fromknecht2014certcoin}
where domain owners are incentivized in a cryptocurrency called Certcoin to perform audits.
The blockchain implementation of Coniks on Ethereum
appeared in \cite{B16}. Wilson and Ateniese recently exploited Bitcoin and its blockchain
to store and retrieve Bitcoin-based PGP certificates~\cite{Ateniese_PGP}.

\section{Our Construction}
\label{sec:our_scheme}
In our extension of certificate transparency, each certificate issued by
a (possibly malicious) certificate authority (CA) is associated with proofs showing the validity
of that particular certificate. The certificates issued by various CAs are stored in a public
(and append-only) log structure that is maintained by the log maintainer. This log maintainer
that maintains the public log of certificates issued by certificate authorities is known
as a \textit{certificate prover} (CP). We use the terms ``log maintainer'' and ``certificate prover''
interchangeably in this work. Anyone can request the log maintainer to get certificates logged in
the log structure.
Similarly, anyone can query the log structure for a cryptographic proof
to verify that the log maintainer is behaving properly or to verify that a particular certificate 
has been logged.

The number of log maintainers is typically small (e.g., much less than one thousand worldwide).
These servers are operated independently by certificate provers (CPs) that could be browser companies,
Internet service providers (ISPs) or any other interested parties.
In general, a certificate prover is a server which has enough resources 
(in terms of storage capacity and computing power) to store all the certificates
and compute the proofs relevant to certificate transparency efficiently.

\subsection{Data Structures Used in Our Construction}
\label{subsec:our_idea}
In our construction, the public log structure maintained by the certificate prover is organized by
using the following tree data structures: \textit{chronTree}, \textit{searchTree} and \textit{accTree}.
The first two of them are similar to the data structures described in~\cite{MDRyan}.
Let $n$ be the total number of certificates present in the log structure,
$t$ be the total number of domain owners and $m$ be the total number of active certificates
present in the log structure. Clearly, $n \ge t \ge m$.
We describe these data structures as follows.
They are illustrated in Fig.~\ref{fig:example}.

\begin{figure*}
\centering
	    \subfloat[The chronTree stores the certificates $c_i$ in the chronological order
	    (the higher the value of $i$, the more recent the certificate $c_i$ is).
	    The certificates $c_1=cert(Alice, pk_{Alice})$, $c_2=cert(Bob, pk_{Bob})$, $c_3=cert(Alice, \texttt{null})$,
	    $c_4=cert(Charlie, pk_{Charlie})$, $c_5=cert(Alice, pk_{Alice}')$, $c_6=cert(Bob, \texttt{null})$,
	    $c_7=cert(Eve, pk_{Eve})$, $c_8=cert(Bob, pk_{Bob}')$, $c_9=cert(Frank, pk_{Frank})$, and
	    $c_{10}=cert(Henry, pk_{Henry})$ are stored in the order they are issued (or revoked).
	    When the certificate $c_1$ (or $c_2$) is revoked, another certificate $c_3$ (or $c_6$) is inserted in the chronTree
	    having \texttt{null} as the public key of $Alice$ (or $Bob$).
	    The leaf-nodes of the chronTree contain items of the form $x = (c, A, digST)$,
	    where $A$ is the accumulation value of the updated set $X$ after
	    inserting the certificate $c$ or revoking the active certificate of the corresponding domain owner.]
	    {
	    \centering
		  \begin{tikzpicture}[level distance=1cm,
                      level 1/.style={sibling distance=4.9cm}, 
                      level 3/.style={sibling distance=1.5cm}, 
                      level 4/.style={sibling distance=.7cm},
		      every text node part/.style={align=center}]
                   \node {$digCT = h(h(h(h(x_1, x_2), h(x_3, x_4)), h(h(x_5, x_6), h(x_7, x_8))),h(x_9,x_{10}))$}
                        child {node {$h(h(h(x_1, x_2), h(x_3, x_4)), h(h(x_5, x_6), h(x_7, x_8)))$}
                                child [sibling distance=3.7cm]{node {$h(h(x_1, x_2), h(x_3, x_4))$}
                                    child {node {$h(x_1, x_2)$}
					  child {node {$x_1$}}
					  child {node {$x_2$}}
                                    }
                                    child {node {$h(x_3, x_4)$}
					  child {node {$x_3$}}
					  child {node {$x_4$}}
                                    }
                                }
                                child [sibling distance=3cm]{node {$h(h(x_5, x_6), h(x_7, x_8))$}
                                    child {node {$h(x_5, x_6)$}
					  child {node {$x_5$}}
					  child {node {$x_6$}}
                                    }
                                    child {node {$h(x_7, x_8)$}
					  child {node {$x_7$}}
					  child {node {$x_8$}}
                                    }                                
                               }
                        }
                        child {node {$h(x_9, x_{10})$}
                              child [sibling distance=.7cm]{node {$x_9$}}
                              child [sibling distance=.7cm]{node {$x_{10}$}}
                        };
                        
                \end{tikzpicture}
                }\bigskip\bigskip

                \subfloat[The searchTree stores the certificates in the lexicographic order of the domain owners.]{
		\centering
                    \begin{tikzpicture}[level distance=1.2cm,
                      level 1/.style={sibling distance=4.1cm},
                      level 2/.style={sibling distance=3.3cm}, every text node part/.style={align=center}]
                    \node {$d_4 = (Eve, pk_{Eve})$ \\ $digST_8 = h(d_4,h(d_2, h(d_1), h(h_3)), h(d_5, h(d_6)))$}
                        child {node {$d_2 = (Bob, (pk_{Bob}, pk^\prime_{Bob}))$ \\ $h(d_2, h(d_1), h(h_3))$}
                                    child [level distance = 1.2cm]{node {$d_1 = (Alice, (pk_{Alice},pk^\prime_{Alice}))$ \\ $h(d_1)$}}
                                    child [level distance = 2cm]{node {$d_3 = (Charlie, pk_{Charlie})$ \\ $h(d_3)$}}
                        }
                        child {node {$d_5 = (Frank, pk_{Frank})$ \\ $h(d_5, h(d_6))$}
                                    child[missing]{ node {}}
                                    child [level distance = 1.2cm]{node {$d_6 = (Henry, pk_{Henry})$ \\ $h(d_6)$}}
                        };
                    \end{tikzpicture}
                }\bigskip\bigskip

                \subfloat[The accTree stores the elements of $X$ (the set of active or current certificates)
	and their corresponding membership (in $X$) witnesses.]{
		\centering
                    \begin{tikzpicture}[level distance=.9cm,
                      level 1/.style={sibling distance=4.4cm},
                      level 2/.style={sibling distance=3.2cm}, every text node part/.style={align=center}]
                    \node {$c_7=cert(Eve, pk_{Eve}),w_{c_7}$}
                        child {node {$c_8=cert(Bob, pk'_{Bob}),w_{c_8}$}
                                    child [level distance = 0.9cm]{node {$c_5=cert(Alice, pk'_{Alice}),w_{c_5}$}}
                                    child [level distance = 1.4cm]{node {$c_4=cert(Charlie, pk_{Charlie}),w_{c_4}$}}
                        }
                        child {node {$c_9=cert(Frank, pk_{Frank}),w_{c_9}$}
                                    child[missing]{ node {}}
                                    child [level distance = 0.9cm]{node {$c_{10}=cert(Henry, pk_{Henry}),w_{c_{10}}$}}
                        };
                    \end{tikzpicture}
                }\bigskip\bigskip
\caption{The structures of chronTree, searchTree and accTree used in our scheme.}
\label{fig:example}
\end{figure*}

\begin{itemize}
    \item {\bf chronTree}: The chronTree is a Merkle hash tree 
    where certificates are stored as the leaf-nodes of the tree. The certificates are
    arranged in the chronological order in left-to-right manner. When a new certificate
    $c=cert(u, pk_u)$ is issued by a CA, it is added to the right of the chronTree. When a certificate
    $c=cert(u, pk_u)$ is revoked, another certificate $c'=cert(u, \texttt{null})$
    is added to the right of the chronTree. A collision-resistant hash
    function $h$ is used to compute the hash values of the nodes of the 
    chronTree. The hash value of the root node (the root digest) of the chronTree
    is denoted by $digCT$.

    \item {\bf searchTree}: This tree is organized as a modified binary search tree where
    data items corresponding to the domain owners are stored in the lexicographic order (of the domain owners).
    Here, a data item corresponding to a domain owner $u$ is of the form $(u, List(pk_{u}))$,
    where $List(pk_u)$ is the list of $N$ most recent public keys of the domain owner $u$.
    In other words, the last certificate in the list is the current public key
    of the domain owner, and other keys are already revoked. The value of $N$ is taken to be
    constant, and the list is maintained in a first-in-first-out (FIFO) fashion. The data items
    are stored in leaf-nodes as well as in non-leaf nodes such that an in-order traversal
    of the searchTree provides the lexicographic ordering of the domain owners. The collision-resistant
    function $h$ is used to compute the hash values corresponding to the nodes of the 
    searchTree. The hash value of a node is computed on the data item (of that node)
    and the hash values of its children. This hash value is also stored in the node
    along with the data item.
    The hash value of the root node (the root digest) of the searchTree is denoted by $digST$.
    The value of $digST$ is linked to a leaf-node of the chronTree.

    \item {\bf accTree}: This tree is organized as a binary search tree in which active
    certificates are stored in the lexicographic order of the domain owners.
    Let $X$ be the set of active (or current) certificates for different domain owners.
    In our construction, the set $X$ is implemented as accTree.
    Each node in the accTree contains a certificate $c=cert(u, pk_{u})\in X$ and the corresponding
    membership (in $X$) witness $w_c$ of $c$.
    The accumulation value $A(X)$ is linked to a leaf-node of the chronTree.
\end{itemize}

\subsection{Detailed Construction}
\label{subsec:contribution}
In this section, we describe our construction in details.
Our construction involves the following algorithms to achieve certificate transparency.

\begin{itemize}
    \item \textbf{Setup}($1^\lambda$): The Setup algorithm runs BLSetup$(1^\lambda)$ to output
    $(p,g,G,G_T,e)$ as the parameters of a bilinear map, where $g$ is a generator of $G$.
    Let $X$ be the set of active certificates issued by a certificate authority (CA),
    that is,
    \begin{equation*}
        X = \{cert(u_i, pk_{u_i})\},
    \end{equation*}
    where $pk_{u_i}$ is the active public key issued by the CA for the domain owner $u_i$.
    The Setup algorithm selects a random element $s\xleftarrow{R}{\Z}_p^*$ as the secret trapdoor information.
    The set $\{g^{s^i}|0 \le i \le q\}$ is made public, where $q$ is an upper bound on $|X|$.
    The accumulation function $f_s(X) : 2^{{\Z}_p^*} \rightarrow G$ gives the accumulation value
    $A(X)$ defined as
    \begin{equation*}
        f_s(X) = A = g^{\prod_{x_i \in X}(x_i + s)}.
    \end{equation*}
    
    The algorithm constructs a searchTree by inserting domain owners in the lexicographic order
    along with other relevant data associated with each domain owner
    and returns $digST$ as the root digest of the searchTree.
    The algorithm constructs an accTree by inserting (only) the active certificates represented as set {\em X} 
    along with their membership (in $X$) witnesses for different domain owners.
    For each active certificate $c \in X$, the membership witness $w_c$ is computed as
    \begin{equation*}
        w_c = g^{\prod_{x_j \in X: x_j \neq c}(x_j+s)}= A^\frac{1}{(c+s)}.
    \end{equation*}
    Finally, the algorithm constructs a chronTree by inserting certificates in the chronological
    (left-to-right) order and returns $digCT$ as the root digest of the chronTree.
    These data structures are discussed in Section~\ref{subsec:our_idea}.
    We note that a \textit{collision-resistant} hash function $h$ is used to compute
    the hash values in the searchTree and the chronTree. Finally, $(p,g,G,G_T,e,\{g^{s^i}|0 \le i \le q\},h,A,digCT,$ $digST)$
    is set as the public parameters $PP$,
    and the secret key is the trapdoor value $s$.\bigskip
    
    \item \textbf{Insert}($c, sk, PP$): When a new certificate $c=cert(u, pk_{u})$ is issued by a CA,
    then it asks the log maintainer (or the certificate prover) to insert the certificate in the log structure.
    The new certificate $c$ is added to the log structure (accumulator, searchTree and chronTree)
    as follows. The public parameters $PP$ are updated accordingly.
    
    \begin{itemize}
        \item Adding $c$ to accTree: Compute the new accumulation value $A'$
        (corresponding to the new set $X'$ = $X\cup \{c\}$) as
        \begin{equation*}
            A' = A^{(c+s)}.
        \end{equation*}
        The membership witness for $c$ is $A$. For each $i \in X$, the updated membership
        witness is computed as
        \begin{equation*}
            w_i^\prime = w_i^{(c+s)}.
        \end{equation*}
        The accTree is updated accordingly.
        
        \item Adding $c$ to searchTree: Search for the node corresponding to the domain owner $u$, if it is present, 
        then append the new public key $pk_{u}$ to the associated list of public keys for $u$.
        Otherwise, create a new node for $u$ with the list containing only the public
        key $pk_{u}$ and insert it in the searchTree maintaining the lexicographic order. Consequently, 
        the root digest of the searchTree is updated as $digST'$.
        
        \item Adding $c$ to chronTree: Add a new node containing $(c, A', digST')$
        to the right of the existing chronTree. The new root digest of the chronTree is updated as $digCT'$.
    \end{itemize}

    \item \textbf{Revoke}($c, sk, PP$): Let $c=cert(u, pk_{u})$ be the certificate to be revoked.
    Then, the following operations are performed on the log structure, and
    the public parameters $PP$ are updated accordingly. 
    
    \begin{itemize}
        \item Removing $c$ from accTree: Compute the new accumulation value $A'$
        (corresponding to the new set $X'$ = $X \backslash \{c\}$) as
        \begin{equation*}
            A' = A^{\frac{1}{(c+s)}}.
        \end{equation*}
        Remove the node corresponding to the domain owner $u$ of certificate $c$ from the accTree. 
        For each $i \in X'$, the updated membership witness is computed as
        \begin{equation*}
            w_i^\prime = w_i^{\frac{1}{(c+s)}}.
        \end{equation*}
        The accTree is updated accordingly.
        
        \item There are no changes in the searchTree for the revocation of $c$.
        Therefore, the value of $digST$, the root digest of the searchTree, remains the same.
        
        \item Adding a new node for the domain owner $u$ to chronTree: Add
        to the right of the existing chronTree a new node containing $(c',A',digST)$, where $c'=cert(u,\texttt{null})$.
        The new root digest of the chronTree is updated as $digCT'$.
    \end{itemize}

    \item \textbf{Query}($PP$): This algorithm is run by an auditor to output a query $Q$.
    The type of the query $Q$ is based on the type of the corresponding proof.
    
    \begin{itemize}
        \item Proof of presence of a certificate (Type 1): The query $Q$ asks for a proof of whether a certificate
	      $c = cert(u, pk_u)$ is present in the log structure.
    
        \item Proof of absence of a certificate (Type 2): The query $Q$ asks for a proof of whether a certificate
	      $c = cert(u, pk_u)$ is absent in the set of active certificates, that is, $c \notin X$. 
        
        \item Proof of absence of a domain owner (Type 3): The query $Q$ asks for a proof of whether
	      a domain owner $u$ is absent, that is, there are no certificates for $u$ in the log structure. 
    
        \item Proof of extension (Type 4): The query $Q$ asks for a proof of whether
	      the chronTree corresponding to $digCT'$ is an extension of the chronTree
	      corresponding to $digCT$.
    
        \item Proof of currency (Type 5): The query $Q$ asks for a proof of whether
	      $pk_u$ is the current public key of the domain owner $u$, that is, whether the certificate
	      $c=cert(u, pk_{u})$ is present in the set of active certificates ($c \in X$).
	      
    \end{itemize}

    \item \textbf{ProofGeneration}($Q, PP$): Upon receiving the query $Q$, the certificate prover (CP)
    generates the corresponding proof $\Pi(Q)$ as follows.
    
    \begin{itemize}
        \item Proof of presence of a certificate (Type 1): 
        Search for the certificate $c$ in the searchTree. If a node for the domain owner $u$ is present
        in the searchTree, define $h_1$ and $h_2$ to be the hash values of the children of the node
        (they are taken to be \texttt{null} if the node is a leaf-node).
        
        Let the sequence of data items of the nodes along the search path be
        \begin{equation*}
            dataseq_{type1} = (d_1, d_2, d_3,\ldots,d_r)
        \end{equation*}
        for some $r\in\N$, where $d_1$ is the data item corresponding to the node for the domain owner $u$,
        $d_r$ is the data item corresponding to the root node, and other data items
        correspond to the other intermediate nodes in the search path.
        Let the sequence of hash values of the nodes
        in the \textit{associated path} (the path containing the siblings of the nodes
        along the search path mentioned above) along with $h_1$ and $h_2$ be
        \begin{equation*}
            hashseq_{type1} = (h_1, h_2, h(v_1), h(v_2), h(v_3),\ldots).
        \end{equation*}
        Send these sequences as the corresponding proof $\Pi$.

        \item Proof of absence of a certificate (Type 2): Search for the certificate $c$ in the accTree.
        If there is no node for the domain owner $u$ in the accTree, then send the non-membership witness
        $\hat{w}_c=(w_c, v_c)$ of $c$, where
        \begin{equation*}
        	v_c = - \prod_{x \in X} (x-c) \Mod p \in\Z_p^*
        \end{equation*}
        and 
        \begin{equation*}	
        	w_c = g^{\frac{(\prod_{x \in X}(x+s))+v_c}{c+s}}\in G.
        \end{equation*}
        
        \item Proof of absence of a domain owner $u$ (Type 3): The proof is similar to the proof of Type 1.
        Find the nodes in the searchTree corresponding to the domain owners $u_1$ and $u_2$
        such that they were the neighbor (in the lexicographic ordering) nodes of the node
        corresponding to $u$ if $u$ were present in the searchTree, that is,
        $u_1 \le u \le u_2$ lexicographically. These nodes can be found by searching
        for the domain owner $u$ in the searchTree, and the search ends at some leaf-node in the searchTree.
        The nodes corresponding to $u_1$ and $u_2$ reside on this search path itself, and
        one of them is the leaf-node (where the search ends).
        
        Let the sequence of data items of the nodes along the search path be
        \begin{equation*}
            dataseq_{type3} = (d_1, d_2,\ldots,d_{r'})
        \end{equation*}
        for some $r'\in\N$, where $d_1$ is the data item corresponding to the leaf-node,
        $d_{r'}$ is the data item corresponding to the root node, and other data items
        correspond to the other intermediate nodes in the search path.
        Let the sequence of hash values of the nodes
        in the \textit{associated path} (the path containing the sibling nodes of the nodes
        along the search path) be
        \begin{equation*}
            hashseq_{type3} = (h(v_1), h(v_2),\ldots).
        \end{equation*}
        Send $(dataseq_{type3},hashseq_{type3})$ as the proof $\Pi$.

        \item Proof of extension (Type 4):
        Compare the chronTree structures corresponding to both $digCT$
        and $digCT'$ and send one hash value per level of the latest chronTree as a proof $\Pi$.
        If the chronTree corresponding to $digCT'$ is an extension of the chronTree corresponding
        to $digCT$, then the latter chronTree is a subtree of the earlier chronTree.
        The proof $\Pi=(h_1, h_2, \ldots)$ is the sequence of hash values of the nodes
        required to compute the current root digest $digCT'$ from the previous root digest $digCT$.
        Here, $h_1$ is the hash value of the sibling node of node $v$ whose
        hash value is $digCT$ (that is, $v$ is the root of the previous chronTree),
        $h_2$ is the hash value of the sibling node of parent node of $v$, and so on.
        
        \item Proof of currency (Type 5):
        Search for the certificate $c$ in the accTree. If $c$ is present
        in the node for the domain owner $u$ in the accTree, 
        then send the membership (in $X$) witness $w_c$ stored at that node.
    \end{itemize}
    
    \item \textbf{Verify}($Q, \Pi, PP$): Given the query $Q$ and the corresponding proof $\Pi$,
    the auditor verifies the proof in the following way depending on the type of the proof.
    
    \begin{itemize}
        \item Proof of presence of a certificate (Type 1):
        The proof consists of the sequences of data items
        $dataseq_{type1} = (d_1, d_2,d_3,\ldots,d_r)$ 
        and
        the sequence of hash values
        $hashseq_{type1} = (h_1,h_2,h(v_1), h(v_2), h(v_3),\ldots)$.
        Given these two sequences, the auditor verifies whether
        \begin{equation*}
	  h(\cdots h(d_3,h(d_2,h(d_1, h_1, h_2),h(v_1)),h(v_2))\ldots)\stackrel{?}= digST
        \end{equation*}
        and outputs \texttt{accept} if the equation holds; it outputs \texttt{reject}, otherwise.
        
        \item Proof of absence of a certificate (Type 2):
        Given the non-membership (in $X$) witness $\hat{w}_c=(w_c, v_c)$ for a certificate $c$,
        the value of $g^s$ (included in $PP$) and the accumulation value $A=f_s(X)$, the auditor verifies whether
        \begin{equation*}
             e(w_c,g^c \cdot g^s) \stackrel{?}= e(A\cdot g^{v_c},g)
        \end{equation*}
        and outputs \texttt{accept} if the equation holds; it outputs \texttt{reject}, otherwise.

        \item Proof of absence of a domain owner (Type 3):
        Given the sequence of data items
        $dataseq_{type3} = (d_1, d_2,\ldots,d_{r'})$ and
        the sequence of hash values
        $hashseq_{type3} = (h(v_1), h(v_2),\ldots)$. The auditor verifies whether
        \begin{equation*}
            h(\cdots h(d_3,h(d_2,h(d_1),h(v_1)),h(v_2))\ldots)\stackrel{?}= digST
        \end{equation*}
        and outputs \texttt{accept} if the equation holds; it outputs \texttt{reject}, otherwise.

        \item Proof of extension (Type 4):
        Given the proof $\Pi = (h_1, h_2, \ldots)$, the auditor verifies whether
        \begin{equation*}
            h(\ldots(h(h(digCT,h_1),h_2)\ldots) \stackrel{?}= digCT'
        \end{equation*}
        and outputs \texttt{accept} if the equation holds; it outputs \texttt{reject}, otherwise.

        \item Proof of currency (Type 5):
	Given the membership (in $X$) witness $w_c$ for a certificate $c$, the value of
        $g^s$ (included in $PP$) and the accumulation value $A=f_s(X)$, the auditor verifies whether
        \begin{equation*}
            e(w_c,g^c \cdot g^s) \stackrel{?}= e(A,g)
        \end{equation*}
        and outputs \texttt{accept} if the equation holds; it outputs \texttt{reject}, otherwise.
        
    \end{itemize}
    
\end{itemize}

\paragraph{\bf Note}\quad
Once the log maintainer receives a request from a domain owner or a certificate authority for the insertion
(or revocation) of a certificate,
it sends a ``Signed Certificate Timestamp'' (SCT) to the domain owner or the CA that made the request.
This SCT contains the timestamp at which the request was received.
It is a ``promise" that the certificate will be logged in future. 
The log maintainer handles these requests periodically and performs batch updates in general.
We note that web users and domain owners share information that they receive from log maintainer
by executing a gossip protocol~\cite{NGR15,ChuatSPLM15}.

\section{Security Analysis}
\label{sec:security_ana}
We assume that an auditor, a domain owner and a monitor are \textit{honest} while a log maintainer (or certificate prover) and a certificate authority
may be \textit{dishonest}. We further assume that the log maintainer does \textit{not} collude with the certificate authority.
We analyze the security of our scheme in the following lemmas.

\begin{lemma}
Let a certificate authority issue a fake certificate $c = cert(u, pk^{\prime}_u)$ for a
particular domain owned by $u$. If the certificate is not logged in the public log maintained by an honest log maintainer,
then an auditor will reject the certificate.
\end{lemma}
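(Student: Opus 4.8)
The plan is to argue by contradiction, exploiting the fact that a proof of presence (Type 1) is the only evidence that would convince the honest auditor, and that such a proof is essentially a Merkle-style authentication path in the searchTree whose soundness rests on the collision-resistance of $h$. First I would set up the scenario: the certificate authority issues $c = cert(u, pk^{\prime}_u)$ but the honest log maintainer never runs \textbf{Insert}$(c, sk, PP)$, so $c$ is not appended to the chronTree and $pk^{\prime}_u$ never enters $List(pk_u)$ in the searchTree node for $u$ (if such a node exists at all). The auditor, before accepting $c$, queries for a Type 1 proof and runs \textbf{Verify}. The claim is that no probabilistic polynomial-time log maintainer can produce a $\Pi = (dataseq_{type1}, hashseq_{type1})$ that passes the check $h(\cdots h(d_3, h(d_2, h(d_1, h_1, h_2), h(v_1)), h(v_2))\cdots) \stackrel{?}{=} digST$ except with negligible probability.

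The key steps, in order: (1) recall that $digST$ is a public commitment to the current searchTree, computed honestly by the log maintainer at the time the certificate was last updated, and that $c$ (equivalently, $pk^{\prime}_u$ inside the node for $u$) does not appear anywhere in that tree. (2) Observe that a verifying Type 1 proof recomputes $digST$ from a claimed leaf/internal data item $d_1$ containing $(u, List(pk_u))$ with $pk^{\prime}_u \in List(pk_u)$, together with sibling hashes along the associated path. (3) Since the honest searchTree committed in $digST$ does not contain such a node, the sequence of hash inputs reconstructed from the forged $\Pi$ must differ from the genuine sequence at the bottom (the data item $d_1$ differs) yet collide at the root. (4) Walk up the path: at the first level where the honestly computed hash value and the adversarially claimed hash value agree despite differing inputs, we have extracted an explicit collision for $h$. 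This is exactly the standard Merkle-tree extraction argument already invoked in Section~\ref{MHT} ("it is impossible, except with negligible probability, to add a new data item in the Merkle hash tree without changing the root digest"). (5) Conclude that such a collision occurs only with probability negligible in $\lambda$, so the auditor outputs \texttt{reject} with overwhelming probability.

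I would also address the edge case where the searchTree contains no node for $u$ at all (the domain owner is entirely absent): then there is no valid $d_1$ of the required form on any search path, so the reconstructed root cannot equal $digST$ without, again, a hash collision; alternatively the log maintainer could only supply a Type 3 (absence) proof, which does not convince the auditor that $c$ is present. The main obstacle is making precise that a proof which verifies necessarily commits to $pk^{\prime}_u$ appearing in the tree — i.e., that the verification equation genuinely "reads off" the membership of $c$ — and then pinning down the level of the collision cleanly; this is routine once one fixes notation for the two hash-input sequences (honest vs. forged) and applies a first-divergence argument. Everything else reduces to the collision-resistance of $h$, which we may assume by hypothesis.
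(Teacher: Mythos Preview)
Your argument is sound, but it proves far more than the lemma requires and misses the simplification the paper actually uses. The hypothesis is that the log maintainer is \emph{honest}. In the paper's proof this is the whole point: the auditor accepts $c$ only when accompanied by a proof of presence, and since the honest log maintainer will not fabricate a proof for a certificate it never logged, no proof is supplied and the auditor rejects. That is the entire argument---two sentences, no collision extraction, no reduction.

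What you have written is essentially the security content of Lemma~3 (dishonest CA attempting to forge a Type~1 proof) transplanted into Lemma~1. Your Merkle first-divergence argument is correct and would indeed establish the claim, but it treats the log maintainer as a potential adversary (``no probabilistic polynomial-time log maintainer can produce a $\Pi$\ldots'') when the statement explicitly assumes it is honest. The paper separates concerns: Lemma~1 handles the trivial case where the honest prover simply declines to lie; Lemma~3 handles the cryptographic case where a dishonest party tries to forge. Your route buys a stronger conclusion (even a malicious prover could not help the CA here), at the cost of obscuring why the lemma is stated with an honest log maintainer in the first place.
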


\begin{proof}
The above lemma holds because the auditor accepts the certificate $c = cert(u, pk^{\prime}_u)$ only when it is accompanied by a proof of 
presence of the certificate in the log. As the certificate $c$ is not present in the log, the log maintainer fails to provide
a proof of presence of the certificate.
\end{proof}

\begin{lemma}
Let a certificate authority issue a fake certificate $c = cert(u, pk^{\prime}_u)$ for a
particular domain owned by $u$. If the certificate is present in the public log maintained by an honest log maintainer,
then the domain owner will be able to immediately identify this certificate issued maliciously and to report this problem.
\end{lemma}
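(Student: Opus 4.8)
\quad
The plan is to argue that, because the log is \emph{public} and maintained by an \emph{honest} log maintainer, the domain owner $u$ (itself, or a monitor acting on its behalf) can retrieve from the log exactly the data the malicious CA caused to be recorded, and that this data exposes the key $pk^{\prime}_u$ that $u$ never requested. First I would recall the effect of $\textbf{Insert}(c,sk,PP)$ on $c=cert(u,pk^{\prime}_u)$: the certificate prover appends $pk^{\prime}_u$ to $List(pk_u)$ in the searchTree (updating $digST$), adds $c$ to the accTree, and appends a leaf containing $(c,A',digST')$ to the chronTree (updating $digCT$). Hence, immediately after the insertion, $pk^{\prime}_u$ is visible either in the current $List(pk_u)$ of the node for $u$ in the searchTree, or — if $N$ more recent certificates for $u$ have since been logged — as a chronTree leaf of the form $cert(u,pk^{\prime}_u)$. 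In either case the information is present in the public log.

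Next I would establish that what $u$ reads is trustworthy. The domain owner queries the log for a proof of presence (Type~1) of $c$; since the log maintainer is honest it returns the genuine sequences $dataseq_{type1}$ and $hashseq_{type1}$, and running $\textbf{Verify}$ for Type~1 recomputes a root hash that matches the public $digST$. Because $digST$ is itself linked into a leaf of the chronTree whose root $digCT$ is public and append-only (cf. the Type~4 proof), and because $h$ is collision-resistant (Section~\ref{MHT}), the log maintainer cannot have exhibited a $List(pk_u)$ different from the one actually hashed into $digST$ except with probability negligible in $\lambda$. Thus the list $u$ obtains is precisely the one the CA's insertion produced, and it contains $pk^{\prime}_u$. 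The analogous argument with the chronTree (whose append-only property is enforced by the same collision-resistance) covers the case where the fake key has already rotated out of the length-$N$ window.

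Finally, the detection step: $u$ knows the set of public keys it has legitimately asked the CA to certify. Observing $pk^{\prime}_u$ in $List(pk_u)$ (or in a chronTree leaf labelled $u$) while $pk^{\prime}_u$ is not among its own requested keys, the domain owner immediately concludes that $c$ was issued maliciously, and can report the incident by presenting the verified Type~1 proof of presence of $c$ as publicly checkable evidence against the published digests.

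The main obstacle is not any hard computation but making precise what ``immediately'' means and handling the windowing of $List(pk_u)$: the statement really asserts \emph{timely} detection conditioned on $u$ (or a monitor) promptly polling the log, and one must be careful to fall back from the searchTree to the chronTree — and to invoke its append-only guarantee, hence collision-resistance of $h$ — when the offending key is no longer among the $N$ most recent keys of $u$.
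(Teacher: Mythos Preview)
Your argument is correct, but it is considerably more elaborate than the paper's own proof, which is essentially a one-liner: since the certificate is in the public log, the domain owner $u$ (running a \emph{monitor} on the log structure) simply observes that a certificate it never requested has been issued for its domain, and reports it. The paper relies entirely on the monitor role described in Section~\ref{CT} and on the honesty of the log maintainer; it invokes no cryptographic machinery at all.

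By contrast, you route the detection through the Type~1 query/verify mechanism, track the effect of \textbf{Insert} across all three trees, and handle the edge case where $pk'_u$ has rotated out of the length-$N$ window by falling back to the chronTree. This is a more mechanistic account and arguably more informative, but two points are worth noting. First, your appeal to collision-resistance of $h$ is unnecessary under the lemma's hypothesis: the log maintainer is \emph{honest}, so whatever it returns is already the true log content, and no unforgeability argument is needed to trust it (that argument belongs to the later lemmas where the log maintainer may be dishonest). Second, phrasing the detection as ``$u$ queries the log for a proof of presence of $c$'' is slightly circular, since $u$ does not yet know $c$; what actually happens is that a Type~1 query for $u$'s node returns the full $List(pk_u)$, in which $pk'_u$ is then spotted --- or, in the windowed case, the chronTree is scanned, which is precisely the monitor's job. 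So your fallback to the chronTree is really the paper's monitor argument in disguise.
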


\begin{proof}
As the certificate $c = cert(u, pk^{\prime}_u)$ is present in the log, the domain owner $u$
(while running a \textit{monitor} on the log structure) can easily check that the new certificate
(that was not requested by $u$) is issued for that particular domain. 
Therefore, the domain owner reports this problem and asks the corresponding certificate authority
to revoke this particular $c$. 
\end{proof}

\begin{lemma}
Let a log maintainer (or certificate prover) be honest. Let a dishonest certificate authority
issue a fake certificate $c = cert(u, pk^{\prime}_u)$ for a particular domain owned by $u$.
If the certificate $c$ is not present in the log, then the certificate authority fails to produce
a valid proof $\Pi$ of any type, except with some probability negligible in the security parameter $\lambda$.
\end{lemma}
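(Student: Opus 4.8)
The plan is to argue by cases on the type of proof $\Pi$ that the dishonest certificate authority would need to forge, showing in each case that producing an accepting proof for the absent certificate $c = cert(u, pk'_u)$ contradicts either the collision-resistance of $h$ or the $q$-strong Diffie-Hellman assumption underlying the bilinear-map accumulator. Since the log maintainer is honest, the public parameters $PP = (p,g,G,G_T,e,\{g^{s^i}\},h,A,digCT,digST)$ are correctly maintained, and in particular $A = f_s(X)$ and $digCT$, $digST$ are the true root digests of the honestly built trees, in which $c$ does not appear (neither as a leaf of the chronTree, nor in the searchTree node for $u$, nor in the accTree, i.e.\ $c \notin X$).

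First I would handle the proofs that reduce to tree soundness (Types 1, 3, 4). For a Type 1 proof, the verifier recomputes $h(\cdots h(d_3, h(d_2, h(d_1, h_1, h_2), h(v_1)), h(v_2))\cdots)$ and checks it against the public $digST$; an accepting proof for a domain owner/key pair not present in the honest searchTree yields two distinct sequences of node labels hashing to the same root $digST$, hence a collision for $h$ (this is exactly the classical Merkle-tree argument already invoked in Section~\ref{MHT}). The Type 3 and Type 4 cases are analogous: a fake proof of absence of a domain owner, or a fake proof of extension for a $digCT'$ that is not a genuine extension of the honest $digCT$, again produces an $h$-collision against the publicly committed digests. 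In each of these cases, a successful forger is turned into a collision-finder for $h$ with the same success probability up to negligible loss, contradicting collision-resistance.

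Next I would handle the accumulator-based proofs (Types 2 and 5). A Type 5 proof (proof of currency) is a membership witness $w_c$ with $e(w_c, g^c \cdot g^s) = e(A, g)$; since $c \notin X = $ the honest active set and $A = f_s(X)$, this is precisely a fake membership witness for a non-member with respect to $A(X)$, which the scheme of Damg{\aa}rd--Triandopoulos forbids except with negligible probability under $q$-strong Diffie-Hellman (as recalled at the end of Section~\ref{bmapacc}). A Type 2 proof (proof of absence of a certificate) is vacuously available to the honest prover here and is in any case a non-membership proof, so presenting it does not help the adversary convince the auditor that an \emph{absent} certificate is present; the only proof that would make an auditor accept $c$ as a valid current certificate is a Type 1 or Type 5 proof, both already handled.

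The main obstacle, and the place where care is needed, is making precise that the adversary really is confined to exactly these five proof formats and that each accepting transcript, together with the honestly maintained $PP$, plugs directly into the appropriate hardness experiment: one must verify that the reduction to the accumulator adversary $\mathcal{B}$ gets to see exactly $\{g^{s^i} : 0 \le i \le q\}$ and a set $X$ of size at most $q$ (which holds since $q$ bounds $|X|$), and that the reduction to the collision-finder extracts the colliding pair from the divergence point between the forged label sequence and the honest tree. Once the case analysis and these two reductions are set up, a union bound over the (constant number of) proof types gives that the total forgery probability is negligible in $\lambda$.
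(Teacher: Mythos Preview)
Your proposal is essentially correct and follows the same case analysis the paper uses: Types~1, 3, 4 reduce to the collision-resistance of $h$ via the standard Merkle argument, and Type~5 reduces to the soundness of the bilinear-map accumulator under $q$-strong Diffie--Hellman. Two differences from the paper are worth noting.

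First, the paper wraps the argument in an explicit security game: the challenger (honest log maintainer) runs Setup, the adversarial CA adaptively issues insertion/revocation/query requests, and at the end the adversary wins if it outputs an accepting proof $\Pi$ that the challenger never supplied. Your proposal works directly against the fixed honest $PP$ without spelling this game out. That is fine mathematically, but the game formulation is what actually pins down ``a valid proof $\Pi$ of any type'' in the lemma and lets one apply a union bound cleanly; you should state it, since otherwise the statement is slightly informal.

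Second, your treatment of Type~2 diverges from the paper's. You dismiss Type~2 as irrelevant because a non-membership proof for an absent $c$ does not help the CA legitimize $c$. The paper instead treats Type~2 symmetrically with the other types inside the game: it argues the adversary cannot forge a non-membership witness for an element that \emph{is} in $X$, again by accumulator soundness. Your reading is arguably more faithful to the lemma's scenario (where indeed $c\notin X$, so a Type~2 proof is honestly obtainable and says nothing the auditor would object to), while the paper's phrasing covers the complementary forgery direction needed to make the ``any type'' clause literally true in the game. Either way the conclusion stands; just be explicit about which interpretation you are proving.
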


\begin{proof}
For this scenario, we define the security model and prove our scheme to be secure in this model.

\paragraph{\bf Security Model}\quad
We define the security game between the challenger (acting as the certificate prover) 
and the probabilistic polynomial-time adversary (acting as the certificate authority) as follows.

\begin{itemize}
\item The challenger executes the Setup algorithm to generate the secret information $s$ 
for the bilinear-map accumulator and the public parameters $PP$.
The public parameters are made available to the adversary.

\item Given the public parameters $PP$, the adversary chooses a sequence of requests (of its choice) 
defined by $\{(\texttt{reqtype}_i,\texttt{metadata}_i)\}$ for $1\le i\le q$
($q$ is polynomial in the security parameter $\lambda$). The type of each of these requests, 
defined by $\texttt{reqtype}$, is an insertion (or revocation) of a certificate $c$ or a query $Q$ 
of any of the five types described above. The relevant information for each of these requests is 
stored in the corresponding $\texttt{metadata}$. If the request is for an insertion (or revocation) of a certificate, 
the challenger performs the necessary changes in the log structure and publishes the updated public parameters $PP$. 
If the request is a query $Q$, the challenger generates the proof corresponding to that particular $Q$ and sends it to the adversary.
\end{itemize}

Let $PP^*$ be the final public parameters at the end of the security game mentioned above. The adversary
generates a proof $\Pi$ of one of the five types. The adversary wins the game if the proof $\Pi$ is not
provided by the challenger in the request phase and Verify$(Q,\Pi,PP^*)=$ \texttt{accept}.

\paragraph{\bf Security Proof}\quad
Based on the security game described above, we show that an adversary in our scheme cannot win the 
game except with some probability negligible in the security parameter $\lambda$. 
To be more precise, the adversary cannot produce a valid proof of one of the following types 
unless it is provided by the challenger itself in the request phase.

\begin{itemize}
\item Proof of presence of a certificate (Type 1): A proof of this type consists of two sequences ($dataseq_{type1},hashseq_{type1}$)
in the searchTree. Since the hash function $h$ involved in the computation of the root digest of
the searchTree is \textit{collision-resistant} and the root digest is public,
the adversary fails to provide such a valid pair of sequences,
except with some probability negligible in $\lambda$.
    
\item Proof of absence of a certificate (Type 2): A proof of this type is a witness of
non-membership in the accumulator set $X$.
Since the bilinear-map accumulator used in our scheme is secure,
the adversary cannot forge a proof $\Pi$ showing that a certificate $c = cert(u, pk_u)$ is
absent in the set $X$ (where actually $c \in X$).
        
\item Proof of absence of a domain owner (Type 3): The proof is similar to the proof of Type 1.
A proof consists of two sequences ($dataseq_{type3},hashseq_{type3}$)
in the searchTree.
Since the hash function $h$ involved in the computation of the root digest of
the searchTree is \textit{collision-resistant} and the root digest is public, the adversary fails
to forge a proof of Type 3, except with some probability negligible in $\lambda$.
    
\item Proof of extension (Type 4): The proof of extension involves the computation of
the \textit{collision-resistant} hash function $h$ per layer of the chronTree and
checking whether the final hash value is equal to $digCT$. Due to the collision-resistance
property of $h$, the adversary cannot forge a proof for extension.
    
\item Proof of currency (Type 5): A proof of this type is a witness of membership
in the accumulator set $X$.
Since the bilinear-map accumulator used in our scheme is secure,
the adversary cannot forge a proof $\Pi$ showing that a certificate $c = cert(u, pk_u)$ is
present in the set $X$ (where actually $c \notin X$).
\end{itemize}
\end{proof}

\begin{lemma}
If a dishonest log maintainer maliciously provides a proof $\Pi$ of any type for a certificate $c = cert(u, pk^{\prime}_u)$,
then an auditor or the domain owner $u$ will be able to detect it.
\end{lemma}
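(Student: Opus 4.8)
The plan is to split the analysis according to whether the malicious proof $\Pi$ is consistent with the public parameters $PP$ held by the honest parties, and to appeal to the two detection mechanisms available in our model: the auditor running \textbf{Verify}, and the domain owner $u$ running a monitor on the public log. I would argue that every possible behavior of the dishonest log maintainer falls into one of three exhaustive cases, and that at least one honest party catches it in each case.

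First I would handle the case in which the log maintainer presents a proof $\Pi$ that is checked against the same digests ($digCT$, $digST$, $A$) that the auditor already holds. For each proof type this reduces to the situation analyzed in Lemma~3. A Type~1 or Type~3 proof is a pair of sequences whose correctness is tied to the public root digest $digST$ of the searchTree, and a Type~4 proof is tied to $digCT$; since $h$ is collision-resistant and these digests are public, the maintainer cannot make \textbf{Verify} output \texttt{accept} on a false statement except with probability negligible in $\lambda$. A Type~2 (resp.\ Type~5) proof is a non-membership (resp.\ membership) witness tied to the public accumulation value $A$, so by the security of the bilinear-map accumulator under the $q$-strong Diffie--Hellman assumption the maintainer cannot forge it. Hence in this case the auditor detects the misbehavior by outputting \texttt{reject}.

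Next I would treat the case in which the dishonest maintainer actually inserts the fake certificate $c = cert(u, pk^\prime_u)$ into the log and updates $digCT$, $digST$ and $A$ accordingly, so that $\Pi$ genuinely verifies. Here the cryptographic checks pass, but the domain owner $u$, running a monitor on the public log exactly as in Lemma~2, observes a certificate for its own domain that it never requested, reports the problem, and asks the CA to revoke $c$; so the domain owner detects the misbehavior. The only remaining possibility is that the maintainer equivocates, maintaining a split view --- one set of public parameters shown to $u$ (and/or its monitor) in which $c$ is absent, and a different, forged set shown to the auditor together with a proof $\Pi$ that verifies against it. This is precisely the fork behavior ruled out by the gossip protocols discussed in Section~\ref{CT} (e.g.~\cite{NGR15,ChuatSPLM15}): because web users and domain owners exchange the digests they receive from the log maintainer, the two inconsistent copies of $digCT$/$digST$/$A$ surface and the maintainer is exposed. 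I expect this last case to be the main obstacle: the first two cases are essentially corollaries of Lemma~2 and Lemma~3, whereas excluding equivocation is not purely cryptographic and must invoke the consistency guarantee of the gossip layer; some care is also needed to verify that the case split is genuinely exhaustive over all five proof types.
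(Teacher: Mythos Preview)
Your three-case decomposition is sensible and for Types~1,~3 and~4 it matches the paper: those proofs are tied to the public digests $digST$ and $digCT$, and collision-resistance of $h$ is exactly the argument used (the paper simply points back to Lemma~3). Your Case~2 and Case~3 are also reasonable and the appeal to gossip for equivocation is in the right spirit.

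The genuine gap is in your Case~1 for Types~2 and~5. You invoke the $q$-strong Diffie--Hellman security of the bilinear-map accumulator to conclude that the log maintainer cannot forge a (non-)membership witness against the honest, public accumulation value $A$. But that security statement only holds for adversaries given $\{g^{s^i}\}$; the log maintainer ran Setup and \emph{knows the trapdoor $s$}. With $s$ in hand it can compute, for any $c\notin X$, a perfectly valid membership witness $w_c=A^{1/(c+s)}$ (and analogously a bogus non-membership pair $(w_c,v_c)$ for $c\in X$), all with respect to the \emph{same} public $A$ the auditor holds. \textbf{Verify} will output \texttt{accept}, so the auditor does not detect anything, and since the maintainer neither inserted $c$ into the log nor changed $A$, neither your Case~2 (monitor sees $c$) nor your Case~3 (inconsistent digests surfacing via gossip) applies. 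Thus your case split is not exhaustive for the accumulator-based proofs.

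The paper handles exactly this point differently: it explicitly concedes that the dishonest maintainer can fabricate Type~2/Type~5 witnesses using $s$, and then argues detection not via accumulator soundness but via the SCT mechanism together with the gossip protocol --- since the certificate $c$ was never requested, the domain owner $u$ recognizes the inconsistency when the forged proof (or its effect) is gossiped. To repair your argument you should drop the $q$-SDH appeal for Types~2 and~5 and route those two types through the SCT/gossip detection channel instead.
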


\begin{proof}

As the dishonest log maintainer has the secret trapdoor value $s$, it can attempt to create a false proof of currency (Type 5)
or a false proof of absence (Type 2) for the certificate $c$ by computing the corresponding witness $w_c = A^{\frac{1}{c+s}}$
or $\hat{w}_c=(w_c, v_c)$ (for any such pair satisfying $w_c = A^{\frac{1}{c+s}}\cdot g^{\frac{v_c}{c+s}}$), respectively
(see the procedure Verify in Section~\ref{subsec:contribution}).
However, the log maintainer
sends an SCT to a certificate authority or a domain owner as a ``promise'' for this certificate to be included later.
As this certificate was never requested to be included in the log, the domain owner can easily detect log inconsistencies
while executing an efficient gossip protocol~\cite{NGR15,ChuatSPLM15}.

For the other proofs (Type 1, Type 3 and Type 4), the log maintainer cannot produce valid proofs with respect to the root
digest of the searchTree or the chronTree (included in the public parameters $PP$). This is due to the \textit{collision-resistance}
property of the hash function $h$. The proof is similar to that discussed in Lemma 3.

\end{proof}

\section{Performance Analysis}
\label{sec:perf_ana}
\subsection{Asymptotic Analysis}
\label{asym_ana}
Let $n$ be the total number of certificates in the log structure, $t$ be the total number of domain owners
and $m$ be the total number of active certificates (the size of the accumulator set $X$).
\paragraph{\bf Cost of Insertion and Revocation}\quad
In the chronTree, the cost of the insertion or revocation of a certificate is
$O(\log n)$ as the new leaf-node is to be inserted to the right of the chronTree
and the new root digest $digCT$ is to be computed.
In the searchTree, the cost of the insertion or revocation of a certificate is
$O(\log t)$ since searching for the node corresponding to the particular domain owner
(and computing the updated root digest $digST$) takes $O(\log t)$ time.
In the accTree (maintained for the accumulator $X$), an insertion or revocation takes $O(m)$ time 
as the new membership witness is to be updated (and stored) for each element of the set $X$.
Although the cost of revocation is $O(m)$ for the bilinear-map accumulator, we get
constant size proofs related to revocation transparency
with constant verification cost (discussed in the following sections).
However, this trade-off is justified as revocation is done by the powerful log maintainer (or certificate prover),
whereas the proof is verified by a lightweight auditor (or a monitor run by a domain owner).

\paragraph{\bf Parameters of a Proof}\quad
For each type of proof, we consider the following parameters: the size of the proof,
the computation cost for the proof and the verification time for the proof.
We compare our construction with the existing schemes for certificate transparency
based on these parameters.
The comparison is summarized in Table~\ref{tab:compar} (see Section~\ref{sec:intro}).

\begin{itemize}
 \item Proof of presence of a certificate (Type 1): The proof consists of the sequences $dataseq_{type1}$ and $hashseq_{type1}$
 in the searchTree. The size of each of these sequences, the time taken to generate them and the time
 taken to verify them with respect to the root digest $digST$ --- all are $O(\log t)$.

 \item Proof of absence of a certificate (Type 2): The size of a non-membership (in $X$) witness is $O(1)$.
 The computation time of a proof is $O(m)$ as the calculation of the non-membership witness
 requires $O(m)$ multiplications in $\Z_p^*$.
 The verification time for a proof is $O(1)$
 as it involves only two pairing operations.

 \item Proof of absence of a domain owner (Type 3): The proof is similar to the proof of Type 1.
 Here, the proof consists of the sequences $dataseq_{type3}$ and $hashseq_{type3}$
 in the searchTree.
 Thus, the size of a proof, the time taken to generate it and the time
 taken to verify it with respect to the root digest $digST$ --- all are $O(\log t)$.

 \item Proof of extension (Type 4): The proof consists of the sequence $(h_1, h_2, \ldots)$
 in the chronTree. The size of the sequence, the time taken to generate it and the time
 taken to verify it with respect to the root digest $digCT$ --- all are $O(\log n)$.

 \item Proof of currency (Type 5): The size of a membership (in $X$) witness is $O(1)$.
 The search for a node containing the certificate $c$ in the accTree takes $O(\log m)$ time,
 and it takes $O(1)$ time to retrieve the precomputed witness $w_c$ stored at that node.
 Thus, the computation time of a proof is $O(\log m)$.
 The verification time for a proof is $O(1)$ as it involves only two pairing operations.

\end{itemize}

\subsection{Performance Evaluation}
\label{subsec:perf_eval}
\subsubsection{Bilinear Setting and Hash Function}
\label{subsubsec:settings}\quad
In general setting, we take the bilinear pairing function $e:G_1\times G_2\rightarrow G_T$
with parameters $(p,g_1,g_2,G_1,G_2,G_T)$, where $|G_1|=|G_2|=|G_T|=p=\Theta(2^{2\lambda})$
and $g_1,g_2$ are generators of the groups $G_1$ and $G_2$, respectively.
We take $\lambda=128$.
Practical constructions of pairings are done on elliptic (or hyperelliptic) curves
defined over a finite field.
We write $E(\F_q)$ to denote the set of points on an elliptic curve $E$ defined over
the finite field $\F_q$. Then $G_1$ is taken as a subgroup of $E(\F_q)$, $G_2$ is taken
as a subgroup of $E(\F_{q^{k'}})$ and $G_T$ is taken as a subgroup of $\F_{q^{k'}}^*$,
where $k'$ is the embedding degree~\cite{KM_CC,Galbraith_DAM,Smart_DAM}.
For the value of the security parameter $\lambda$ up to 128, Barreto-Naehrig (BN)
curves~\cite{Barreto_SAC,Taxonmy_JOC} are suitable for our scheme.
In this setting, each of the elements of $\Z_p^*$ and $G_1$ is of size 256 bits.
We use SHA-256 as the \textit{collision-resistant} hash function $h$ used to compute
the root digests corresponding to the chronTree and the searchTree.

\subsubsection{Size of a Proof}
\label{subsubsec:proof_size}\quad
We calculate the size (in bits) of each type of proof as follows.

\begin{itemize}
 \item Proof of presence of a certificate (Type 1): 
 If SHA-256 is used as the hash function $h$, then the size of $hashseq_{type1}$ is at most $256\log t$ bits.
 Let the size of the data item stored in each node of the searchTree be denoted by $|data|$.
 Then, $|data|\approx N\cdot pk_{size}$, where $N$ is the maximum size of the list of public keys stored corresponding to
 a domain owner and $pk_{size}$ is the size of a public key. Thus, the size of a proof (in bits) is
  $\log t(256+N\cdot pk_{size})$,
 where $t$ is the number of domain owners (or nodes) present in the searchTree.

 \item Proof of absence of a certificate (Type 2): The size of a proof is 512 bits as the size of
 a non-membership (in $X$) witness $\hat{w_c}=(w_c\in G_1,v_c\in\Z_p^*)$ is 512 bits.

 \item Proof of absence of a domain owner (Type 3): The proof is similar to the proof of Type 1.
 Thus, the size of a proof is $\log t(256+\log t+N\cdot pk_{size})$ bits,
 where $t$ is the number of domain owners (or nodes) present in the searchTree.

 \item Proof of extension (Type 4): The proof consists of a sequence of hash values
 in the chronTree. Therefore, for SHA-256 used as the hash function $h$, the size of
 a proof is at most $256\log n$ bits, where $n$ is the number of certificates present in the chronTree.

 \item Proof of currency (Type 5): The size of a proof is 256 bits as
 the size of a membership (in $X$) witness $w_c\in G_1$ is 256 bits.

\end{itemize}

\subsubsection{Cost of Verification of a Proof}
\label{subsubsec:vercost}\quad
For the timing analysis of the pairing operations, we use PandA, a recent
software framework for \textit{Pairings and Arithmetic} developed by
Chuengsatiansup \emph{et al.}~\cite{Panda_PBC}.
The cycle-counts of a pairing operation (Ate pairing) for a 128-bit secure, Type-3 pairing framework
involving a pairing-friendly BN curve is 3832644 (taken as the median of 10000 measurements
on a 2.5 GHz Intel Core i5-3210M processor~\cite{Panda_PBC}).
These many cycles take approximately 1.53 milliseconds on this processor.
On the other hand, we estimate the time taken, on a 2.5 GHz Intel Core i5-3210M processor,
to compute SHA-256 from \textit{eBASH}, a benchmarking project for hash functions~\cite{eBASH}.

\begin{itemize}
 \item Proof of presence of a certificate (Type 1): The proof consists of the sequences $dataseq_{type1}$ and $hashseq_{type1}$
 in the searchTree. The number of hashes performed for the verification is at most $\log t$.
 Each hash is performed on a message of the form $(d,h_1,h_2)$, where $d$ is the data item associated
 with a node and $h_1$ (or $h_2$) is the hash value of the left (or right) child of the node.
 Let the size of the data item stored in each node of the searchTree be denoted by $|data|$.
 Then, $|data|\approx N\cdot pk_{size}$,
 where $N$ is the maximum size of the list
 of public keys stored corresponding to a domain owner and $pk_{size}$ is the size of a public key.
 For SHA-256 used as the hash function $h$, the size of $h_1$ (or $h_2$) is $256$ bits.
 Therefore, the size of the message input to the hash function $h$ is given
 by $hash_{in}=(512+|data|)$ bits.
 For long messages, computing a single hash value takes 12.71 cycles per byte of the message
 on a 2.5 GHz Intel Core i5-3210M processor~\cite{eBASH}.
 Consequently, computing each hash value requires
 $(hash_{in}\cdot (1.59))$ cycles (approximately) that, in turn, takes around
 $(hash_{in}\cdot (0.64)\cdot 10^{-6})$ milliseconds on this processor.
 Thus, the total time required for the verification of a proof is around
 $(\log t\cdot hash_{in}\cdot (0.64)\cdot 10^{-6})$ milliseconds.

 \item Proof of absence of a certificate (Type 2): 
 The cost of verification of a proof is 3.06 milliseconds
 as the verification requires two pairing operations.

 \item Proof of absence of a domain owner (Type 3): The proof is similar to that of Type 1.
 Thus, the time required for the verification of a proof is around
 $(\log t\cdot hash_{in}\cdot (0.64)\cdot 10^{-6})$ milliseconds.

 \item Proof of extension (Type 4): The proof consists of a sequence of hash values
 in the chronTree. If SHA-256 is used as the hash function $h$, then the size of
 the input to $h$ is $512$ bits (the hash values of two children). The total number
 of hashes to be performed is at most $\log n$, where $n$ is the number of certificates present in the chronTree.
 For messages of length 64 bytes, computing a single hash value takes 29.94 cycles per byte of the message
 on a 2.5 GHz Intel Core i5-3210M processor~\cite{eBASH}.
 Thus, the total time required for the verification of a proof is around
 $(\log n\cdot 64\cdot (11.98)\cdot 10^{-6})$ milliseconds.

 \item Proof of currency (Type 5): 
 The cost of verification of a proof is 3.06 milliseconds
 as the verification requires two pairing operations.
 
\end{itemize}

\section{Conclusion}
\label{sec:conclusion}
We have developed a scheme which is an extended version of the existing certificate transparency schemes.
Some of the proofs in our scheme enjoy constant proof-size and constant
verification cost.
Apart from handling the existing proofs efficiently, we have introduced a new proof confirming the absence of
a particular certificate in the log structure. We have also analyzed the security of our scheme.
Finally, we have provided a thorough performance evaluation of our scheme.

\section{Acknowledgments}
This project has been made possible in part by a gift from the NetApp University Research Fund,
a corporate advised fund of Silicon Valley Community Foundation.

\bibliographystyle{splncs03}
\bibliography{draft}

\begin{thebibliography}{10}
\providecommand{\url}[1]{\texttt{#1}}
\providecommand{\urlprefix}{URL }

\bibitem{AlicherryK09}
Alicherry, M., Keromytis, A.D.: Doublecheck: Multi-path verification against
  man-in-the-middle attacks. In: {IEEE} Symposium on Computers and
  Communications - {ISCC} 2009. pp. 557--563 (2009)

\bibitem{amann2012revisiting}
Amann, B., Vallentin, M., Hall, S., Sommer, R.: Revisiting {SSL}: {A}
  large-scale study of the internet’s most trusted protocol. Tech. rep.,
  International Computer Science Institute, Berkeley (2012),
  \url{http://www.icsi.berkeley.edu/pubs/techreports/ICSI_TR-12-015.pdf}

\bibitem{Baric}
Bari{\'{c}}, N., Pfitzmann, B.: Collision-free accumulators and fail-stop
  signature schemes without trees. In: Advances in Cryptology - {EUROCRYPT}
  1997. pp. 480--494 (1997)

\bibitem{Barreto_SAC}
Barreto, P., Naehrig, M.: Pairing-friendly elliptic curves of prime order. In:
  Preneel, B., Tavares, S. (eds.) Selected Areas in Cryptography - {SAC} 2006,
  Lecture Notes in Computer Science, vol. 3897, pp. 319--331. Springer Berlin
  Heidelberg (2006)

\bibitem{BasinCKPSS14}
Basin, D.A., Cremers, C.J.F., Kim, T.H., Perrig, A., Sasse, R., Szalachowski,
  P.: {ARPKI}: Attack resilient public-key infrastructure. In: Proceedings of
  the 2014 {ACM} {SIGSAC} Conference on Computer and Communications Security,
  Scottsdale, AZ, USA, November 3-7, 2014. pp. 382--393 (2014)

\bibitem{BechtoldP14}
Bechtold, S., Perrig, A.: Accountability in future internet architectures.
  Commun. {ACM}  57(9),  21--23 (2014),
  \url{http://doi.acm.org/10.1145/2644146}

\bibitem{BenalohM}
Benaloh, J.C., de~Mare, M.: One-way accumulators: {A} decentralized alternative
  to digital sinatures (extended abstract). In: Advances in Cryptology -
  {EUROCRYPT} 1993. pp. 274--285 (1993)

\bibitem{eBASH}
Bernstein, D.J., Lange, T.: e{BASH}: {ECRYPT} benchmarking of all submitted
  hashes, \url{http://bench.cr.yp.to/ebash.html}

\bibitem{B16}
Bonneau, J.: Eth{IKS}: Using ethereum to audit a coniks key transparency log.
  In: BITCOIN '16: 3rd Workshop on Bitcoin and Blockchain Research. Barbados
  (2016)

\bibitem{DigiNotar}
Brewster, T.: Diginotar goes bankrupt after hack (2011),
  \url{http://www.itpro.co.uk/636244/diginotar-goes-bankrupt-after-hack}

\bibitem{dynamicAccumulator}
Camenisch, J., Lysyanskaya, A.: Dynamic accumulators and application to
  efficient revocation of anonymous credentials. In: Advances in Cryptology -
  {CRYPTO} 2002. pp. 61--76 (2002)

\bibitem{WoTGermano}
Caronni, G.: Walking the web of trust. In: {IEEE} International Workshops on
  Enabling Technologies: Infrastructure for Collaborative Enterprises -
  {WETICE} 2000. pp. 153--158 (2000)

\bibitem{ChuatSPLM15}
Chuat, L., Szalachowski, P., Perrig, A., Laurie, B., Messeri, E.: Efficient
  gossip protocols for verifying the consistency of certificate logs. In: 2015
  {IEEE} Conference on Communications and Network Security, {CNS} 2015,
  Florence, Italy, September 28-30, 2015. pp. 415--423 (2015)

\bibitem{Panda_PBC}
Chuengsatiansup, C., Naehrig, M., Ribarski, P., Schwabe, P.: Pand{A}: Pairings
  and arithmetic. In: Cao, Z., Zhang, F. (eds.) Pairing-Based Cryptography -
  Pairing 2013, Lecture Notes in Computer Science, vol. 8365, pp. 229--250.
  Springer International Publishing (2014)

\bibitem{Damgard}
Damg{\aa}rd, I., Triandopoulos, N.: Supporting non-membership proofs with
  bilinear-map accumulators. {IACR} Cryptology ePrint Archive  2008,  538
  (2008), \url{http://eprint.iacr.org/2008/538}

\bibitem{DerlerHS15}
Derler, D., Hanser, C., Slamanig, D.: Revisiting cryptographic accumulators,
  additional properties and relations to other primitives. In: Topics in
  Cryptology - {CT-RSA} 2015, The Cryptographer's Track at the {RSA} Conference
  2015. pp. 127--144 (2015)

\bibitem{DowlingGHS16}
Dowling, B., G{\"{u}}nther, F., Herath, U., Stebila, D.: Secure logging schemes
  and certificate transparency. In: ESORICS 2016, Vienna, Austria, October
  24-28 (2016)

\bibitem{eckersley2010ssliverse}
Eckersley, P., Burns, J.: Is the {SSL}iverse a safe place? Chaos Communication
  Congress (2010), \url{https://www.eff.org/files/ccc2010.pdf}

\bibitem{Taxonmy_JOC}
Freeman, D., Scott, M., Teske, E.: A taxonomy of pairing-friendly elliptic
  curves. Journal of Cryptology  23(2),  224--280 (2010)

\bibitem{fromknecht2014certcoin}
Fromknecht, C., Velicanu, D., Yakoubov, S.: Cert{C}oin: {A} {N}ame{C}oin based
  decentralized authentication system. Tech. rep., Massachusetts Institute of
  Technology, MA, USA. 6.857 Class Project (2014),
  \url{https://courses.csail.mit.edu/6.857/2014/files/19-fromknecht-velicann-yakoubov-certcoin.pdf}

\bibitem{Galbraith_DAM}
Galbraith, S.D., Paterson, K.G., Smart, N.P.: Pairings for cryptographers.
  Discrete Applied Mathematics  156(16),  3113--3121 (Sep 2008)

\bibitem{CTWeb}
Google: Certificate transparency,
  \url{https://www.certificate-transparency.org/}

\bibitem{CTWebMethod}
Google: How certificate transparency works,
  \url{https://www.certificate-transparency.org/how-ct-works}

\bibitem{KimHPJG13}
Kim, T.H., Huang, L., Perrig, A., Jackson, C., Gligor, V.D.: Accountable key
  infrastructure ({AKI}): A proposal for a public-key validation
  infrastructure. In: 22nd International World Wide Web Conference, {WWW} '13,
  Rio de Janeiro, Brazil, May 13-17, 2013. pp. 679--690 (2013)

\bibitem{KM_CC}
Koblitz, N., Menezes, A.: Pairing-based cryptography at high security levels.
  In: Smart, N. (ed.) Cryptography and Coding, Lecture Notes in Computer
  Science, vol. 3796, pp. 13--36. Springer Berlin Heidelberg (2005)

\bibitem{langley2011public}
Langley, A.: Public key pinning. Imperial Violet: Adam Langley’s Weblog
  (2011), \url{https://www.imperialviolet.org/2011/05/04/pinning.html}

\bibitem{laurie2013rfc}
Laurie, B., Langley, A., Kasper, E.: Certificate transparency (June 2013),
  \url{https://tools.ietf.org/html/rfc6962}

\bibitem{universalAccumulator}
Li, J., Li, N., Xue, R.: Universal accumulators with efficient nonmembership
  proofs. In: International Conference on Applied Cryptography and Network
  Security - {ACNS} 2007. pp. 253--269 (2007)

\bibitem{marlinspike2013trust}
Marlinspike, M.: Trust assertions for certificate keys. Internet Draft (2013),
  \url{http://tack.io/draft.html}

\bibitem{MatsumotoSP15}
Matsumoto, S., Szalachowski, P., Perrig, A.: Deployment challenges in log-based
  {PKI} enhancements. In: Proceedings of the Eighth European Workshop on System
  Security, EuroSec 2015, Bordeaux, France, April 21, 2015. pp. 1:1--1:7
  (2015), \url{http://doi.acm.org/10.1145/2751323.2751324}

\bibitem{MelaraBBFF15}
Melara, M.S., Blankstein, A., Bonneau, J., Felten, E.W., Freedman, M.J.:
  {CONIKS}: Bringing key transparency to end users. In: 24th {USENIX} Security
  Symposium, {USENIX} Security 15, Washington, D.C., USA, August 12-14, 2015.
  pp. 383--398 (2015)

\bibitem{Merkle_CR}
Merkle, R.C.: A certified digital signature. In: Brassard, G. (ed.) Advances in
  Cryptology - CRYPTO 1989, Lecture Notes in Computer Science, vol. 435, pp.
  218--238. Springer New York (1990)

\bibitem{Nguyen}
Nguyen, L.: Accumulators from bilinear pairings and applications. In: Topics in
  Cryptology - {CT-RSA} 2005, The Cryptographer's Track at the {RSA} Conference
  2005. pp. 275--292 (2005)

\bibitem{NGR15}
Nordberg, L., Gillmor, D., Ritter, T.: Gossiping in {CT} (2015),
  \url{https://tools.ietf.org/html/draft-ietf-trans-gossip-00}

\bibitem{MDRyan}
Ryan, M.D.: Enhanced certificate transparency and end-to-end encrypted mail.
  In: 21st Annual Network and Distributed System Security Symposium - {NDSS}
  2014 (2014),
  \url{http://www.internetsociety.org/doc/enhanced-certificate-transparency-and-end-end-encrypted-mail}

\bibitem{Smart_DAM}
Smart, N.P., Vercauteren, F.: On computable isomorphisms in efficient
  asymmetric pairing-based systems. Discrete Applied Mathematics  155(4),
  538--547 (Feb 2007)

\bibitem{WendlandtAP08}
Wendlandt, D., Andersen, D.G., Perrig, A.: Perspectives: {I}mproving
  {SSH}-style host authentication with multi-path probing. In: 2008 {USENIX}
  Annual Technical Conference. pp. 321--334 (2008)

\bibitem{WoTWiki}
Wikipedia: Web of trust, \url{https://en.wikipedia.org/wiki/Web_of_trust}

\bibitem{Ateniese_PGP}
Wilson, D., Ateniese, G.: From pretty good to great: Enhancing {PGP} using
  {B}itcoin and the blockchain. In: International Conference on Network and
  System Security, {NSS}. pp. 368--375 (2015),
  \url{http://dx.doi.org/10.1007/978-3-319-25645-0_25}

\end{thebibliography}

\end{document}